%
%
%
%

\documentclass[runningheads]{llncs}
\makeatletter  
 \renewcommand\subsubsection{
  \@startsection{subsubsection}{2}{\z@}%
  {-18\p@ \@plus -4\p@ \@minus -4\p@}%
  {8\p@ \@plus 4\p@ \@minus 4\p@}%
  {\normalfont\normalsize\bfseries\boldmath \rightskip=\z@ \@plus 8em \pretolerance=10000 }
} 
\makeatother   

 \usepackage{fancyvrb}
 \usepackage{times}
 \usepackage{srcltx}
 \usepackage{braket}
 \usepackage{tikz}
 \usepackage{verbatim}
 \usepackage{hyperref}
 \usepackage{longtable}
 \usepackage{color}
 \usepackage{stmaryrd} 

 \usepackage{amsmath}
 \usepackage{amssymb}
 \usepackage{latexsym}
 
\usepackage[british]{babel}
\hyphenation{sce-nar-i-o}
\hyphenation{ter-mi-nol-o-gy}
\hyphenation{spec-i-fi-ca-tion}
\hyphenation{in-di-cated}
\hyphenation{re-spec-tive-ly}
\hyphenation{pos-si-bil-i-ty}
\hyphenation{re-sult}
\hyphenation{in-de-ter-mi-nate}
\usepackage[latin1]{inputenc}
\usepackage{xcolor}

\newcommand{\PS}{\texttt{PolicySet}}
\newcommand{\Pol}{\texttt{Policy}}
\newcommand{\R}{\texttt{Rule}}
\newcommand{\Tar}{\texttt{Target}}
\newcommand{\Cnd}{\texttt{Condition}}
\newcommand{\Rq}{\texttt{Request}}
\newcommand{\Mtc}{\texttt{Match}}
\newcommand{\Any}{\texttt{AnyOf}}
\newcommand{\All}{\texttt{AllOf}}

\newcommand{\Deny}{\texttt{Deny}}
\newcommand{\Permit}{\texttt{Permit}}

\newcommand{\is}{::=}

\newcommand{\effect}{\textit{Effect}}

\newcommand{\deny}{\mathbf{d}}
\newcommand{\permit}{\mathbf{p}}

\newcommand{\T}{\top}
\newcommand{\Tp}{\top_{\permit}}
\newcommand{\Td}{\top_{\deny}}

\newcommand{\F}{\bot}
\newcommand{\I}{I}

\newcommand{\Id}{I_{\deny}}
\newcommand{\Ip}{I_{\permit}}
\newcommand{\Idp}{I_{\deny\permit}}

\newcommand{\true}{\mathit{true }}
\newcommand{\false}{\mathit{false }}

\newcommand{\policyvalues}{\mathbf{P}}

\newcommand{\four}{\mathbf{four }}
\newcommand{\tval}{\mathbf{t\!t}} 
\newcommand{\fval}{\mathbf{f\!f}}
\newcommand{\topf}{\top\!\top}
\newcommand{\botf}{\bot\!\bot}

\newcommand{\pval}{\mathbf{p}} 
\newcommand{\dval}{\mathbf{d}}
\newcommand{\naa}{\mathbf{\frac{n}{a}}} 

\newcommand{\semantics}[1]{ \llbracket #1\rrbracket}
\newcommand{\p}{p}
\renewcommand{\d}{d}
\renewcommand{\max}{\mathit{max}}
\renewcommand{\min}{\mathit{min}}
\newcommand{\mymax}{\mathit{Max}_{\mysubseteq}}
\newcommand{\mymin}{\mathit{Min}_{\mysubseteq}}
\newcommand{\eval}{\mathit{eval}}
\newcommand{\comb}{\theta}
\newcommand{\mysubseteq}{\sqsubseteq_{\policyvalues}}
\newcommand{\posetP}{\boldsymbol{P}_{\policyvalues}}

\newcommand{\po}{\mathbf{p-o}} 
\newcommand{\denyo}{\mathbf{d-o}} 
\newcommand{\fa}{\mathbf{f-a}} 
\newcommand{\oa}{\mathbf{o-1-a}}

\newcommand{\oplusb}{\oplus^{B}}
\newcommand{\otimesb}{\otimes^{B}}
\newcommand{\oplusd}{\oplus^{\mathcal{D}}}
\newcommand{\otimesd}{\otimes^{\mathcal{D}}}
\newcommand{\odotd}{\odot^{\mathcal{D}}}
\newcommand{\ominusd}{\ominus^{\mathcal{D}}}

\newcommand{\mc}{\mathcal}
\newcommand{\mb}{\mathbb}
\newcommand{\mf}{\mathfrak}

\newcommand{\calL}{\mathcal{L}}
\newcommand{\calP}{\mathcal{P}}

\newcommand{\D}{$\mathcal{D}$}
\newcommand{\half}{\frac{1}{2}}

\newcommand{\funnyarrow}{\leadsto}
\newcommand{\lubL}{\bigsqcup}
\newcommand{\glbL}{\bigsqcap}

\newcommand{\etal}{\textit{et al. }}
\newcommand{\myangle}[1]{\langle #1 \rangle}
\newcommand{\seq}[1]{\langle #1 \rangle}
\newcommand{\ra}{\rightarrow}

\bibliographystyle{plain}

\usepackage{url}
\urldef{\mailsa}\path|{cdpu, riis, nielson}@imm.dtu.dk|

\begin{document}


\title{The Logic of XACML -- Extended}

\titlerunning{The Logic of XACML}

\author{Carroline Dewi Puspa Kencana Ramli, Hanne Riis Nielson, Flemming Nielson}
\authorrunning{Carroline Dewi Puspa Kencana Ramli, Hanne Riis Nielson, Flemming Nielson}

\institute{Department of Informatics and Mathematical Modelling \\
Danmarks Tekniske Universitet \\
Lyngby, Denmark\\
\email{\mailsa}}

\maketitle

\begin{abstract}
We study the international standard XACML 3.0 for describing security access control policy in a compositional way. 
Our main contribution is to derive a logic that precisely captures the idea behind the standard and to formally define the semantics of the policy combining algorithms of XACML.
To guard against modelling artefacts we provide an alternative way of characterizing the policy combining algorithms and we formally prove the equivalence of these approaches.
This allows us to pinpoint the shortcoming of previous approaches to formalization based either on Belnap logic or on \D-algebra. 
\end{abstract}

\section{Introduction}
\label{s:introduction}
XACML (eXtensible Access Control Markup Language) is an approved OASIS \footnote{OASIS (Organization for the Advancement of Structured Information Standard) is a non-for-profit, global consortium that drives the development, convergence, and adoption of e-business standards. Information about OASIS can be found at \url{http://www.oasis-open.org}.} Standard access control language \cite{xacml,XACMLSpesification}. XACML describes both an access control policy language and a request/response language. The \emph{policy language} is used to express access control policies (\textit{who can do what when}) while the \emph{request language} expresses queries about whether a particular access should be allowed  and the \emph{response language} describes answers to those queries.

In order to manage modularity in access control, XACML constructs policies into several components, namely \emph{PolicySet}, \emph{Policy} and \emph{Rule}. A PolicySet is a collection of other PolicySets or Policies whereas a Policy consists of one or more Rules. A Rule is the smallest component of XACML policy and each Rule only either grants or denies an access. As an illustration, suppose we have access control policies used within a National Health Care System. The system is composed of several access control policies of local hospitals. Each local hospital has its own policies such as patient policy, doctor policy, administration policy, etc. Each policy contains one or more particular rules, for example, in patient policy there is a rule that only the designated patient  can read his or her record. In this illustration, both the National Health Care System and local hospital policies are  PolicySets. However the patient policy is a Policy and one of its rules is the patient record policy. Every policy is only applicable to a certain target and a policy is applicable when a request matches to its target, otherwise, it is not applicable. The evaluation of composing policies is based on a combining algorithm -- the procedure for combining decisions from multiple policies. There are four standard combining algorithms in XACML i.e., (i) permit-overrides, (ii) deny-overrides, (iii) first-applicable and (iv) only-one-applicable.

The syntax of XACML is based on XML format \cite{xml}, while its standard semantics is described normatively using natural language in \cite{XACMLSpesification}. Using English paragraphs in standardization leads to misinterpretation and ambiguity. In order to avoid this drawback, we define an abstract syntax of XACML 3.0 and a formal XACML components evaluation based on XACML 3.0 specification in Section \ref{s:XACML components}. Furthermore, the evaluation of the XACML combining algorithms is explained in Section \ref{s:combining algorithms}.

Recently there are some approaches to formalizing the semantics of XACML. In \cite{Halpern2008}, Halpern and Weissman show XACML formalization using First Order Logic (FOL). However, their formalization does not capture whole XACML specification. It is too expensive to express XACML combining algorithms in FOL. Kolovski \etal in \cite{Kolovski2007,Kolovski2007a} maps a large fragment of XACML to Description Logic (DL) -- a subset of FOL -- but they leave out the formalization of only-one-applicable combining algorithm. Another approach is to represent XACML policies  in term of Answer Set Programming (ASP). Although Ahn \etal in  \cite{Ahn2010} show a complete XACML formalization in ASP, their formalization is based on XACML 2.0, which is out-of-date nowadays. More particular, the combining algorithms evaluation in XACML 2.0 is simpler than XACML 3.0. Our XACML 3.0 formalization is closer to multi-valued logic approach such as Belnap logic \cite{Belnap1977} and \D-algebra \cite{Ni2009}. Bruns \etal in \cite{Bruns2007,Bruns2008}  and Ni \etal in \cite{Ni2009} define a logic for XACML using Belnap logic and \D-algebra, respectively. In some cases, both methods show different results from the XACML standard specification. We discuss the shortcoming of formalization based either on Belnap logic or on \D-algebra  in Section \ref{s:related work} and we conclude in Section \ref{s:conclusion and future work}.


\section{XACML Components}
\label{s:XACML components}
XACML syntax is described verbosely in XML format. For our analysis purpose, we do abstracting XACML components.  From the abstraction, we show how XACML evaluates policies. We give an example how XACML policies can be described in our abstraction and the components evaluation at the end of this section.  

\subsection{Abstracting XACML Components}
There are three main policy components in XACML, namely \PS, \Pol\ and \R. \PS\ is the root of all XACML policies. A \PS\ is composed of a sequence of other \PS\ or \Pol\ components along with a policy combining algorithm ID and a \Tar. A \Pol\ is composed of a sequence of \R, a \Tar\ and a rule combining algorithm ID. A \R\ is a single entity that defines the individual rule in the policy. Each \R\  has a particular effect to an access request, i.e., either \emph{deny} or \emph{permit} the access.  Each \R\  is composed of a \Tar\ and a \Cnd.  A \Tar\ is an XACML component that indicates under which  categories an XACML policy is applicable. A \Tar\ consists of  conjunction of \Any\  component  with each \Any\ consists of disjunction of \All\ components and each \All\  consists of conjunction of \Mtc. Each \Mtc\ contains only one particular category to be matched with the request.  Typical categories of XACML attributes  are \textit{subject} category (e.g. human user, workstation, etc) \textit{action} category (e.g. read, write, delete, etc),  \textit{resource} category (e.g. database, server, etc) and \textit{environment} category (e.g. SAML, J2SE, CORBA, etc). A \Cnd\ is a set of propositional formulae that refines the applicability of a \R. 

A \Rq\ contains a set of available informations on desired access request such as subject, action, resource and environment categories. A \Rq\ also contains additional information about external state, e.g. the current time, the temperature, etc.

We present in Table \ref{t:syntax} a succinct syntax of XACML 3.0 that is faithful to the more verbose syntax used in the standard \cite{XACMLSpesification}.

\vspace{-15pt} 
\begin{table}[h]
\caption{Abstraction of XACML 3.0 Components}
\label{t:syntax}
\vspace{-10pt} 
\begin{center}

\begin{tabular}{|lcll|}
\hline
  \multicolumn{4}{|c|}{\textbf{XACML Policy Components}} \\
   \PS      & \is & $\myangle{\textrm{\Tar},\myangle{\textrm{\PS}_1, \ldots, \textrm{\PS}_m},\comb}$ & \\
             & \textbar & $\myangle{\textrm{\Tar},\myangle{\textrm{\Pol}_1, \ldots, \textrm{\Pol}_m},\comb}$ & where $m \geq 0$ \\
   \Pol       & \is & $\myangle{ \textrm{\Tar},\myangle{\textrm{\R}_1, \ldots, \textrm{\R}_m},\comb}$ & where $m \geq 1$ \\
   \R       & \is & $\myangle{\effect, \textrm{\Tar}, \textrm{\Cnd}}$ & \\ 
   \Cnd   & \is & \textit{propositional formulae}  & \\
   \Tar & \is & $\mathbf{Null}$ & \\
             & \textbar & $\textrm{\Any}_1 \wedge \ldots \wedge \textrm{\Any}_m$ & where $m \geq 1$ \\
	\Any  & \is & $\textrm{\All}_1 \vee \ldots \wedge \textrm{\All}_m$ & where $m \geq 1$ \\
   \All  & \is & $\textrm{\Mtc}_1 \wedge \ldots \wedge \textrm{\Mtc}_m$ & where $m \geq 1$ \\
   \Mtc  & \is & $\Phi(\alpha)$ & \\ 
	$\Phi  $  & \is & $\mathbf{subject}$ \textbar\ $\mathbf{action}$ \textbar\ $\mathbf{resource}$ \textbar\ $\mathbf{enviroment}$ &\\
   $\alpha$  & \is & \textit{attribute value} & \\
   $\theta$  & \is & $\po$ \textbar\ $\denyo$ \textbar\ $\fa$ \textbar\ $\oa$ & \\
   $\effect$ & \is & $\deny$\ \textbar\ $\permit$  &\\ 
\hline
	\multicolumn{4}{|c|}{\textbf{XACML Request Component}} \\
  \Rq   & \is & $\Set{A_1, \ldots, A_m}$ & where $m \geq 1$ \\
  $A$			& \is & $\Phi(\alpha)$ \textbar\ \textit{external state} &\\
\hline
   \end{tabular}
\end{center}
\vspace{-15pt} 
\end{table}

\subsection{XACML Evaluation}
The evaluation of XACML components starts from \Mtc\ evaluation  and it is continued  iteratively until  \PS\ evaluation. The \Mtc, \All, \Any, and \Tar\ values are either \textit{match}, \textit{not match} or \textit{indeterminate}. The value is indeterminate if there is an error during the evaluation so that the decision cannot be made at that moment.  The \R\ evaluation depends on \Tar\ evaluation and \Cnd\ evaluation. The \Cnd\ component is a set of propositional formulae which each formula is evaluated to either \textit{true}, \textit{false} or \textit{indeterminate}. An empty \Cnd\ is always evaluated to \textit{true}. The \R's value is either  \textit{applicable}, \textit{not applicable} or \textit{indeterminate}. An applicable \R\ has effect either \textit{deny} or \textit{permit}. Finally, the evaluation of \Pol\ and \PS\ are based on a combining algorithm of which the result can be either  \textit{applicable} (with its effect either \textit{deny} or \textit{permit}), \textit{not applicable} or \textit{indeterminate}.

\subsubsection{Three-Valued Lattice}

We use three-valued logic to determine XACML evaluation value. 
We define $\calL_3 = \myangle{V_3, \leq}$ be \textit{three-valued lattice} where $V_3$ is the set $\Set{\T, \I, \F}$ and  $\F \leq \I \leq \T$. 
Given a subset $S$ of $V_3$, we denote the greatest lower bound (glb) and the least upper bound (lub) at $S$ (w.r.t. $\calL_3$) by $\glbL S$ and $\lubL S$, respectively. Recall that $\glbL \emptyset = \top$ and $\lubL \emptyset = \bot$.

We use $\semantics{.}$ notation to map XACML elements into their evaluation values. The evaluation of XACML components to values in $V_3$ is summarized in Table \ref{t:mapping v3 xacml}.
\begin{table}
\vspace{-15pt} 
\caption{Mapping $V_3$ into XACML Evaluation Values}
\vspace{-15pt} 
\label{t:mapping v3 xacml}
\[
\begin{array}{|c|c|c|c|}
\hline
V_3 & \textrm{\Mtc\ and \Tar\ value} & \textrm{\Cnd\ value} & \textrm{\R, \Pol\ and \PS value} \\
\hline
\top & \textrm{match} & \textrm{true} & \textrm{applicable (either deny or permit)} \\
\bot & \textrm{not match} & \textrm{false} & \textrm{not applicable} \\
\I   & \textrm{indeterminate} & \textrm{indeterminate} & \textrm{indeterminate}\\
\hline
\end{array}
\]
\vspace{-15pt} 
\end{table}

\subsubsection{Match Evaluation}
A \Mtc\ element $\mc{M}$ is an attribute value that the request should fulfill. Given a \Rq\ component $\mc{Q}$, the evaluation of \Mtc\ element is as follows:
\begin{equation}
   \semantics{\mc{M}}(\mc{Q}) = 
   \begin{cases}
   \top & \mc{M} \in \mc{Q} \\
   \bot & \mc{M} \not\in \mc{Q} \\
   \I   & \textrm{there is an error during the evaluation} \\
   \end{cases}
\end{equation}

\subsubsection{Target Evaluation}
Let $\mc{M}$ be a \Mtc, $\mc{A} = \mc{M}_1 \wedge \ldots \wedge \mc{M}_m$ be an \All, $\mc{E} = \mc{A}_1 \vee \ldots \vee \mc{A}_n$ be an \Any,  $\mc{T} = \mc{E}_1 \wedge \ldots \mc{E}_o$ be a \Tar\  and $\mc{Q}$ be a \Rq. Then, the evaluations of \All, \Any, and \Tar\ are as follows:

\begin{equation}
\semantics{\mc{A}}(\mc{Q}) = \bigsqcap_{i=1}^m \semantics{\mc{M}_i}(\mc{Q})
\end{equation}

\begin{equation}
\semantics{\mc{E}}(\mc{Q}) = \bigsqcup_{i=1}^n \semantics{\mc{A}_i}(\mc{Q})
\end{equation}

\begin{equation}
\semantics{\mc{T}}(\mc{Q}) = \bigsqcap_{i=1}^o \semantics{\mc{E}_i}(\mc{Q})
\end{equation}

In summary, we can simplify the \Tar\ evaluation as follows:
\begin{equation}
 \semantics{\mc{T}}(\mc{Q}) = \bigsqcap \bigsqcup \bigsqcap \semantics{\mc{M}}(\mc{Q})
\end{equation}

An empty \Tar\ -- indicated by $\mathbf{Null}$ -- is always evaluated to $\top$. 

\subsubsection{Condition Evaluation}
We define the conditional evaluation function $\eval$ as an arbitrary function to evaluate \Cnd\ to value in $V_3$ given  a \Rq\ component $\mc{Q}$. The evaluation of \Cnd\ is defined as follows:

\begin{equation}
\semantics{\mc{C}}(\mc{Q}) = \eval(\mc{C},\mc{Q})
\end{equation}

\subsubsection{Extended Values}
In order to distinguish an applicable policy to permit an access from applicable policy to deny an access, we extend $\top$ in $V_3$ value to $\Tp$ and $\Td$, respectively. The same case also applies to indeterminate value. The extended indeterminate value contains the potential effect values which could have occurred if there would not have been an error during a evaluation. The possible extended indeterminate values are \cite{XACMLSpesification}:
\begin{itemize}
 \item Indeterminate Deny ($\Id$): an indeterminate from a policy  which could have evaluated to deny but not permit, e.g., a \R\ which evaluates to indeterminate and its effect is deny.
 \item Indeterminate Permit ($\Ip$): an indeterminate from a policy which could have evaluated to permit but not deny, e.g., a \R\ which evaluates to indeterminate and its effect is permit.
 \item Indeterminate Deny Permit ($\Idp$): an indeterminate from a policy which could have effect either deny or permit.
\end{itemize}
We extend the set $V_3$ to $V_6  = \Set{\Tp, \Td, \Id, \Ip, \Idp, \bot}$ and we use $V_6$  to evaluate XACML policies. 

\subsubsection{Rule Evaluation}
Let $\mc{R} = \myangle{*,\mc{T},\mc{C}}$ be a \R\ and $\mc{Q}$ be a \Rq. Then, the evaluation of \R\ is determined as follows:
\begin{equation}
\label{e:rule evaluation}
   \semantics{\mc{R}}(\mc{Q}) = 
\begin{cases}
   \top_{*}   & \semantics{\mc{T}}(\mc{Q}) = \top \textrm{ and } \semantics{\mc{C}}(\mc{Q}) = \T \\
   \F & \bigl( \semantics{\mc{T}}(\mc{Q}) = \T \textrm{ and } \semantics{\mc{C}}(\mc{Q}) = \F \bigr) \textrm{ or }  \semantics{\mc{T}}(\mc{Q}) = \F \\
   \I_{*} & \textrm{otherwise} \\
\end{cases}
\end{equation}

Let $F$ and $G$ be two values in $V_3$. We define a new operator $\funnyarrow : V_3 \times V_3 \ra V_3$ as follows:
\begin{equation}
\label{e:funnyarrow}
   F \funnyarrow G = 
\begin{cases}
   G &  \textrm{if } F = \T \\
   F & \textrm{otherwise}
\end{cases}
\end{equation}

We define a function $\sigma:V_3 \times \Set{\permit, \deny} \ra V_6$ that maps a value in $V_3$ into a value in $V_6$ given a particular \R's effect as follows:
\begin{equation}
\sigma(X,*) =
\begin{cases}
\label{e:sigma}
 X   & \textrm{if } X = \bot \\
 X_* & \textrm{otherwise}
\end{cases}
\end{equation}

\begin{proposition}
\label{prop: rule evaluation}
Let $\mc{R} = \myangle{*,\mc{T},\mc{C}}$ be a \R\ and  $\mc{Q}$ be a \Rq. Then, the following equation holds
\begin{equation}
   \semantics{\mc{R}}(\mc{Q}) = \sigma\left(\semantics{\mc{T}}(\mc{Q}) \funnyarrow \semantics{\mc{C}}(\mc{Q}), * \right)
\end{equation} 
\end{proposition}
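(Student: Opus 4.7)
The plan is to prove the identity by an exhaustive case analysis on the pair $(\semantics{\mc{T}}(\mc{Q}), \semantics{\mc{C}}(\mc{Q})) \in V_3 \times V_3$, comparing the three-way case split in the definition of $\semantics{\mc{R}}(\mc{Q})$ in equation \eqref{e:rule evaluation} against the right-hand side obtained by composing $\funnyarrow$ and $\sigma$. Since there are only three values in $V_3$, there are nine combinations to check, but they collapse naturally into the three branches of the definition.

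First I would abbreviate $T := \semantics{\mc{T}}(\mc{Q})$ and $C := \semantics{\mc{C}}(\mc{Q})$ and compute $T \funnyarrow C$ in closed form: by \eqref{e:funnyarrow} this equals $C$ when $T = \top$ and equals $T$ otherwise. Then I would split on whether $T = \top$. If $T = \top$ then $T \funnyarrow C = C$, and I would further distinguish the three possible values of $C$, which directly correspond to the three cases of \eqref{e:rule evaluation}: $C = \top$ gives $\sigma(\top,*) = \top_*$ matching the first branch; $C = \bot$ gives $\sigma(\bot,*) = \bot$ matching the second branch; and $C = \I$ gives $\sigma(\I,*) = \I_*$ matching the otherwise branch. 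If $T \neq \top$ then $T \funnyarrow C = T$, and I would split on $T \in \{\bot, \I\}$: the case $T = \bot$ yields $\sigma(\bot,*) = \bot$, which agrees with the second branch of \eqref{e:rule evaluation} via its disjunct $\semantics{\mc{T}}(\mc{Q}) = \F$; the case $T = \I$ yields $\sigma(\I,*) = \I_*$, which agrees with the otherwise branch since neither of the first two conditions of \eqref{e:rule evaluation} is satisfied.

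The argument is entirely mechanical; there is no real obstacle beyond bookkeeping. The only subtle point worth flagging explicitly is that the definition of $\sigma$ in \eqref{e:sigma} treats $\bot$ specially (it is not tagged with an effect), which is precisely why the case $T \funnyarrow C = \bot$ must land in the not-applicable branch rather than producing a spurious $\bot_*$; this is exactly what makes the $\sigma$-wrapping compatible with the asymmetric second clause of \eqref{e:rule evaluation}. Once this observation is made, the nine-cell table can be summarised compactly, and the equation follows in each cell by inspection.
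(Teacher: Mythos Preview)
Your proof is correct and takes essentially the same approach as the paper's own proof, which simply tabulates all nine cases $(\semantics{\mc{T}}(\mc{Q}),\semantics{\mc{C}}(\mc{Q}))\in V_3\times V_3$ and checks the two sides agree. Your top-level split on whether $T=\top$ is a slightly more structured version of the same exhaustive check; in fact your handling of the case $T=\bot$, $C=\I$ (yielding $\bot$ on both sides) is cleaner than the paper's printed table, which contains a typo in that row.
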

\begin{proof}
The table below shows the proof of Proposition  \ref{prop: rule evaluation}.
\[\begin{array}{c|c|c|c|c}
\semantics{\mc{T}}(\mc{Q}) & \semantics{\mc{C}}(\mc{Q}) & \semantics{\mc{T}}(\mc{Q}) \funnyarrow \semantics{\mc{C}}(\mc{Q}) &  \sigma\left(\semantics{\mc{T}}(\mc{Q}) \funnyarrow \semantics{\mc{C}}(\mc{Q}), * \right) & \semantics{\mc{R}}(\mc{Q}) \\  
\hline
\top & \top & \top & \top_* & \top_* \\
\top & \bot & \bot & \bot   & \bot \\
\top & \I   & \I   & \I_*   & \I_* \\
\hline
\bot & \top & \bot & \bot & \bot \\
\bot & \bot & \bot & \bot & \bot \\
\bot & \I   & \bot & \I_*  & \I_* \\
\hline
\I & \top & \I & \I_* & \I_* \\
\I & \bot & \I & \I_* & \I_* \\
\I & \I & \I & \I_* & \I_* \\
  \end{array} 
\]
\hfill$\Box$
\end{proof}

\subsubsection{Policy Evaluation}
The standard evaluation of Policy element taken from \cite{XACMLSpesification} is as follows:
\begin{center}
\small{
\begin{tabular}{|c|c|c|}
\hline
\Tar\ value & \R\ value & \Pol\ Value\\
\hline
match         & At least one \R\ value is applicable & Specified by the combining algorithm \\
match         & All \R\ values are not applicable    & not applicable \\
match         & At least one \R\ value is indeterminate        & Specified by the combining algorithm \\
not match     & Don't care & not applicable \\
indeterminate & Don't care & indeterminate  \\
\hline
\end{tabular}
}
\end{center}

Let $\mc{P} = \myangle{\mc{T},\mb{R},\comb}$ be a \Pol\  where $\mb{R} = \seq{\mc{R}_1, \ldots, \mc{R}_n}$. Let $\mc{Q}$ be a \Rq\  and $\mb{R'} = \seq{\semantics{\mc{R}_1}(\mc{Q}), \ldots, \semantics{\mc{R}_n}(\mc{Q})}$. The evaluation of \Pol\ is defined as follows:
\begin{equation}
\label{e:ext policy semantics}
   \semantics{\mc{P}}(\mc{Q}) = 
\begin{cases}
   \I_*      & \semantics{\mc{T}}(\mc{Q}) = \I \textrm{ and } \bigoplus_\comb(\mb{R'}) \in  \Set{\T_*,\ \I_*} \\
   \bot      & \semantics{\mc{T}}(\mc{Q}) = \F \textrm{ or}\\
             & \semantics{\mc{T}}(\mc{Q}) = \T \textrm{ and }\forall \mc{R}_i: \semantics{\mc{R}_i}(\mc{Q}) = \F \\
\bigoplus_\comb(\mb{R'}) & \textrm{otherwise} 
\end{cases}   
\end{equation}
\note{The combining algorithms denoted by $\bigoplus$ is explained in Section \ref{s:combining algorithms}.}

\subsubsection{PolicySet Evaluation}
The evaluation of \PS\ is similar to \Pol\ evaluation. However, the input of the combining algorithm is a sequence of either \PS\ or \Pol\ components. 

Let $\mc{PS} = \myangle{\mc{T},\mb{P},\comb}$ be a \PS\ where $\mb{P} = \seq{\mc{P}_1, \ldots, \mc{P}_n}$. Let $\mc{Q}$ be a \Rq\  and $\mb{P'} = \seq{\semantics{\mc{P}_1}(\mc{Q}), \ldots, \semantics{\mc{P}_n}(\mc{Q})}$. The evaluation of  \PS\ is defined as follows:
\begin{equation}
\label{e:ext policyset semantics}
   \semantics{\mc{PS}}(\mc{Q}) = 
\begin{cases}
   \I_*      & \semantics{\mc{T}}(\mc{Q}) = \I \textrm{ and } \bigoplus_\comb(\mb{P'}) \in  \Set{\T_*,\ \I_*} \\
   \bot      & \semantics{\mc{T}}(\mc{Q}) = \F \textrm{ or}\\
             & \semantics{\mc{T}}(\mc{Q}) = \T \textrm{ and }\forall \mc{P}_i: \semantics{\mc{P}_i}(\mc{Q}) = \F \\
\bigoplus_\comb(\mb{P'}) & \textrm{otherwise} 
\end{cases}   
\end{equation}

\subsection{Example}
The following example simulate briefly how a policy is built using the abstraction. The  example is motivated by \cite{Evered2004,Hankin2009} which presents a health information system for a small nursing home in New South Wales, Australia. 
\begin{example}[Patient  Policy]
The general policy in the hospital in particular:
\begin{enumerate}
  \item Patient Record Policy
  \begin{itemize}
	 \item RP1: only designated patient \textbf{can} read his or her patient record except that if the patient is less than 18 years old, the patient's guardian is \textbf{permitted} also read the patient's record,
	 \item RP2: patients \textbf{may} only write patient surveys into their own records
	 \item RP3: both doctors and nurses are \textbf{permitted} to read any patient records,
  \end{itemize}
  \item Medical Record Policy
  \begin{itemize}
	 \item RM1: doctors \textbf{may} only write medical records for their own patients and 
	 \item RM2: \textbf{may not} write any other patient records,
  \end{itemize}
\end{enumerate}

The XACML policies for this example is shown in Figure \ref{f:example}. The   topmost policy in this example is the Patient Policy that contains two policies, namely the Patient Record Policy and the Medical Record Policy. The access is granted if either one of the Patient Record Policy or the Medical Record Policy gives a permit access. Thus in this case, we use permit-overrides combining algorithm to  combine those two policies. In order to restrict the access, each policy denies an access if there is a rule denies it. Thus, we use deny-overrides combining algorithms to combine the rules.

\begin{figure}[htb]
\vspace{-15pt}
\scriptsize{
\begin{Verbatim}[frame=single]
PS_patient = <Null, <P_patient_record, P_medical_record>, p-o>
P_patient_record = <Null, <RP1, RP2, RP3>, d-o>
P_medical_record = <Null, <RM1, RM2>, d-o> 

RP1 = 
< p,
  subject(patient) /\ action(read) /\ resource(patient_record),
  patient(id,X) /\ patient_record(id,Y) /\ 
  (X = Y \/ (age(Y) < 18 /\ guardian(X,Y))>

RP2 = 
< p,
  subject(patient) /\ action(write) /\ resource(patient_survey),
  patient(id,X) /\ patient_survey(id, X)>

RP3=
< p,
  (subject(doctor) \/ subject(nurse)) /\ action(read) /\ resource(patient_record), 
  true>

RM1 = 
< p,
  subject(doctor) /\ action(write) /\ resource(medical_record),
  doctor(id,X) /\ patient(id,Y) /\ medical_record(id, Y) /\ patient_doctor(Y,X)>

RM1 = 
< d,
  subject(doctor) /\ action(write) /\ resource(medical_record),
  doctor(id,X), patient(id,Y), medical_record(id, Y), not patient_doctor(Y,X)>
\end{Verbatim}
}
\vspace{-15pt} 
\caption{The XACML Policy for Patient Policy}
\label{f:example}
\end{figure}

Suppose now there is an emergency situation and a doctor $D$ asks  permission to read patient record $P$. The \Rq\ is as follows:
\begin{center}
\scriptsize{
\begin{BVerbatim}[frame=single] 
{ subject(doctor), action(read), resource(patient_record), 
  doctor(id,d), patient(id,p), patient_record(id,p)} 
\end{BVerbatim} 
}
\end{center}

Only \Tar\ \verb+RP3+ matches for this request and the effect of \verb+RP3+ is permit. Thus, the final result is doctor $D$ is allowed to read patient record $P$. Now, suppose that after doing some treatment, the doctor wants to update the medical record. A request is sent 
\begin{center}
\scriptsize{
\begin{BVerbatim}[frame=single] 
{ subject(doctor), action(write), resource(medical_record),  
  doctor(id,d), patient(id,p), medical_record(id,p)} 
\end{BVerbatim} 
}
\end{center}

 The \Tar\ \verb+RM1+ and the \Tar\ \verb+RM2+ match for this request, however because doctor $D$ is not registered as patient $P$'s doctor thus \Cnd\ \verb+RM1+ is evaluated to $\false$ while \Cnd\ \verb+RM2+ is evaluated to $\true$. In consequence, \R\ \verb+RM1+ is not applicable while \R\ \verb+RM2+ is applicable with effect deny.

\end{example}
      
\section{Combining Algorithms}
\label{s:combining algorithms}
Currently, there are four basic combining algorithms in XACML, namely (i) \textbf{permit-overrides}, (ii) \textbf{deny-overrides}, (iii) \textbf{first-applicable}, and (iv) \textbf{only-one-applicable}.  The input of a combining algorithm is a sequence of \R, \Pol\ or \PS\ values. In this section we give formalizations of the XACML 3.0 combining algorithms based on \cite{XACMLSpesification}. To guard against modelling artifacts we provide an alternative way of characterizing the policy combining algorithms and we formally prove the equivalence of these approaches.

\subsection{Pairwise Policy Values}
\label{ss:pairwise policy values}
In $V_6$ we define the truth values of XACML components by extending $\T$ to $\Tp$ and $\Td$ and $\I$ to $\Id, \Ip$ and $\Idp$. This approach shows straightforwardly the status of XACML component. However, it is easier if we use numerical encoding when we need to do a computation, especially for computing policies compositions. Thus, we encode all the values returned by algorithms as pairs of natural numbers. 

In this numerical encoding, the value $\mathbf{1}$ represents an applicable value (either deny or permit), $\mathbf{\half}$ represents indeterminate value and $\mathbf{0}$ means there is no applicable value. In each tuple, the first element represents the \Deny\ value ($\Td$) and the later represents \Permit\ value ($\Tp$). We can say $[0,0]$ for not applicable ($\F$) because neither \Deny\ nor \Permit\ is applicable, $[1,0]$ for applicable with deny effect ($\Td$) because only \Deny\ value is applicable, $[\half,0]$ for $\Id$ because the \Deny\ part is indeterminate, $[\half,\half]$ for $\Idp$ because both \Deny\ and \Permit\  have indeterminate values. The conversion applies also for \Permit. 

A set of \textit{pairwise policy values} is $\policyvalues=\Set{[0,0], [\half,0], [0,\half],  [\half, \half], [1,0], [0,1]}$.  Let $[D,P]$ be an element in $\policyvalues$. We denote $\d([D,P]) = D$ and $\p([D,P]) = P$ for the function that returns the \Deny\ value and \Permit\ value, respectively. 

\label{d:k function}
We define $\delta : V_6 \ra \policyvalues$ as a mapping function that maps $V_6$ into $\policyvalues$ as follows:
\begin{equation}
\label{e:delta}
   \delta(X) = 
\begin{cases}
   [0,0]         & X = \F\\
    [\half,0]     & X = \Id\\
   [0,\half]     & X = \Ip\\
   [\half,\half] & X = \Idp\\
   [1,0]         & X = \T_{\deny} \\
   [0,1]         & X = \T_{\permit} \\
\end{cases}
\end{equation}
We define $\delta$ over a sequence $S$ as $\delta(S) = \seq{\delta(s) | s \in S }$.

We use pairwise comparison for the order of $\policyvalues$.   We define an order $\mysubseteq$ for $\policyvalues$ as follows $[D_1,P_1] \mysubseteq [D_2,P_2]$ iff $D_1 \leq D_2$ and $P_1 \leq P_2$ with $0\leq\half\leq 1$. We  write $\posetP$ for the partial ordered set (poset) $(\policyvalues, \mysubseteq)$ illustrated in Figure \ref{f:6-valued poset}.

\begin{figure}[h]
\begin{center}
\begin{tikzpicture}
\draw (0,0) -- (-3,2) -- (-3,4);
\draw (0,0) -- (3,2) -- (3,4);
\draw (-3,2) -- (0,4);
\draw (3,2) -- (0,4); 
\draw (0,0) node [fill=white] {$[0,0] = \F$};
\draw (-3,2) node [fill=white] {$[\half,0] = \Id$};
\draw (3,2) node [fill=white] {$[0,\half] = \Ip$};
\draw (-3,4) node [fill=white] {$[1,0] = \T_{\deny}$};
\draw (0,4) node [fill=white] {$[\half,\frac{1}{2}] = \Idp$};
\draw (3,4) node [fill=white] {$[0,1] = \T_{\permit}$};
\end{tikzpicture}
\end{center}
\vspace{-15pt} 
\caption{The Partial Ordered Set $\posetP$ for Pairwise Policy Values}
\label{f:6-valued poset}
\vspace{-15pt} 
\end{figure}

Let $\max : 2^\mf{R} \ra \mf{R}$ be a function that returns the maximum value of a set of rational numbers and let $\min : 2^\mf{R} \ra \mf{R}$ be a function that returns the minimum value of a set of rational numbers. We define $\mymax : 2^\policyvalues \ra \policyvalues$ as a function that returns the maximum pairwise policy value which is defined as follows:
\begin{equation}
\mymax(S) = [\max(\Set{d(X) | X \in S}),\max(\Set{p(X) | X \in S})]
\end{equation}
and $\mymin : 2^\policyvalues \ra \policyvalues$ as a function that return the minimum pairwise policy value which is defined as follows: 
\begin{equation}
\mymin(S) = [\min(\Set{d(X) | X \in S}),\min(\Set{p(X) | X \in S})] 
\end{equation}

\subsection{Permit-Overrides Combining Algorithm}

The permit-overrides combining algorithm is intended for those cases where a permit decision should have priority over a deny decision. This algorithm (taken from \cite{XACMLSpesification}) has the following behaviour:
\begin{enumerate}
\item If any decision is $\Tp$ then the result is $\Tp$,
\item otherwise, if any decision is $\Idp$ then the result is $\Idp$,
\item otherwise, if any decision is $\Ip$  and another decision is $\Id$ or $\Td$, then the result is $\Idp$,
\item otherwise, if any decision is $\Ip$ then the result is $\Ip$,
\item otherwise, if decision is $\Td$ then the result is $\Td$,
\item otherwise, if any decision is $\Id$ then the result is $\Id$,
\item otherwise, the result is $\F$.
\end{enumerate}

\begin{figure}[t]
\begin{center}
\begin{tikzpicture}
\draw (0,0) -- (-1,1);
\draw (0,0) -- (1,1) -- (1,2);
\draw (-1,1) -- (0,3); 
\draw (1,2) -- (0,3) -- (0,4); 
\draw (0,0) node [fill=white] {$\F$};
\draw (-1,1) node [fill=white] {$\Ip$};
\draw (1,1) node [fill=white] {$\Id$};
\draw (1,2) node [fill=white] {$\Td$};
\draw (0,3) node [fill=white] {$\Idp$};
\draw (0,4) node [fill=white] {$\Tp$};

\draw (4,0) -- (5,1);
\draw (4,0) -- (3,1) -- (3,2);
\draw (5,1) -- (4,3); 
\draw (3,2) -- (4,3) -- (4,4); 
\draw (4,0) node [fill=white] {$\F$};
\draw (3,1) node [fill=white] {$\Ip$};
\draw (3,2) node [fill=white] {$\Tp$};
\draw (5,1) node [fill=white] {$\Id$};
\draw (4,3) node [fill=white] {$\Idp$};
\draw (4,4) node [fill=white] {$\T_{\deny}$};

\draw (8,0) -- (7,1) -- (7,2) -- (8,3);
\draw (8,0) -- (9,1) -- (9,2) -- (8,3);
\draw (8,0) node [fill=white] {$\F$};
\draw (7,1) node [fill=white] {$\T_{\deny}$};
\draw (7,2) node [fill=white] {$\Id$};
\draw (9,1) node [fill=white] {$\T_{\permit}$};
\draw (9,2) node [fill=white] {$\Ip$};
\draw (8,3) node [fill=white] {$\Idp$};

\end{tikzpicture}
\end{center}
\vspace{-15pt} 
\caption{The Lattice $\calL_{\po}$ for The Permit-Overrides Combining Algorithm (left), The Lattice $\calL_{\denyo}$ for The Deny-Overrides Combining Algorithm (middle) and The Lattice $\calL_{\oa}$ for The Only-One-Applicable Combining Algorithm (right)}
\label{f:l:permit-overrides}
\vspace{-15pt} 
\end{figure}
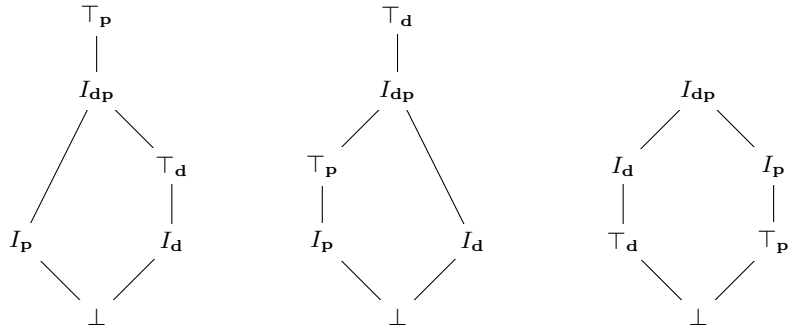

We call $\calL_{\po} = (V_6,\sqsubseteq_{\po})$ for the lattice using the permit-overrides combining algorithm  where  $\sqsubseteq_{\po}$ is the ordering depicted in  Figure \ref{f:l:permit-overrides}. The least upper bound operator for $\calL_{\po}$ is denoted by $\bigsqcup_{\po}$. 

\begin{definition}
The permit-overrides combining algorithm $\bigoplus_{\po}^{V_6}$ is a mapping function from a  sequence of $V_6$ elements into an element in $V_6$ as the result of composing policies. Let $S = \seq{s_1, \ldots,s_n}$ be a sequence of policy values in $V_6$ and $S' = \Set{s_1, \ldots, s_n}$. We define the \textit{permit-overrides combining algorithm under $V_6$} as follows:
   \begin{equation}
   \label{e:po}
   \bigoplus_{\po}^{V_6}(S) = \bigsqcup_{\po} S' \end{equation}
\end{definition}

The permit-overrides combining algorithm can also be expressed under $\policyvalues$. The idea is that we inspect the maximum value of \Deny\  and \Permit\  in the set of pairwise policy values. We conclude that the decision is permit if the \Permit\  is applicable (i.e. it has value 1). If the \Permit\  is indeterminate (i.e. it has value $\half$) then the decision is $\Idp$ if the \Deny\  is either indeterminate (i.e. it has value $\half$) or applicable (i.e. it has value 1). Otherwise we take the maximum value of \Deny\  and \Permit\  from the set of pairwise policy values as the result of permit-overrides combining algorithm.

\begin{definition}
The permit-overrides combining algorithm $\bigoplus_{\po}^{\policyvalues}$ is a mapping function from a  sequence of $\policyvalues$ elements into an element in $\policyvalues$ as the result of composing policies. Let $S = \seq{s_1, \ldots, s_n}$ be a sequence of pairwise policy values and $S' = \Set{s_1, \ldots, s_n}$. We define the  \textit{permit-overrides combining algorithm under $\policyvalues$} as follows:
\begin{equation}
\label{e:po pairwise}
   \bigoplus_{\po}^{\policyvalues}(S) = 
\begin{cases}
   [0,1]         & \mymax(S') = [\_,1] \\
   [\half,\half] & \mymax(S') = [D,\half], D \geq \half \\
   \mymax(S')     & \textrm{otherwise}

\end{cases}
\end{equation}
\end{definition}

\begin{proposition}
\label{prop:permitoverrides}
Let $S$ be a sequence of policy values in $V_6$. Then 
\[\delta(\bigoplus_{\po}^{V_6}(S)) = \bigoplus_{\po}^{\policyvalues}(\delta(S))\]
\end{proposition}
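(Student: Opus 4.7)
The plan is to prove the equation by case analysis, exploiting the fact that both sides depend only on the \emph{set} $S'=\{s_1,\dots,s_n\}$ of distinct values occurring in $S$ (the $V_6$ version is a join, the $\policyvalues$ version is built from $\mymax$, both of which are set operations). So it suffices to enumerate the possible sets $S'\subseteq V_6$ and verify the equality in each case.

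To keep the enumeration manageable, I would mirror the priority order of the seven-clause English definition of permit-overrides: first split on whether $\Tp\in S'$; if not, on whether $\Idp\in S'$; and so on. For each case I would compute the left-hand side using the Hasse diagram of $\calL_{\po}$ in Figure~\ref{f:l:permit-overrides}, apply $\delta$, and then independently compute the right-hand side using the pairwise definition~(\ref{e:po pairwise}). Concretely, (i) if $\Tp\in S'$, then $\lub{\po}S'=\Tp$ since $\Tp$ is the top of $\calL_{\po}$, and on the other side $\mymax(\delta(S'))$ has $p$-component $1$, placing us in the first clause of~(\ref{e:po pairwise}), which yields $[0,1]=\delta(\Tp)$. (ii) If $\Tp\notin S'$ but $\Idp\in S'$, the join is $\Idp$; on the right, the $p$-component of $\mymax$ is $\tfrac12$ and its $d$-component is $\geq\tfrac12$, landing us in the second clause, which yields $[\tfrac12,\tfrac12]=\delta(\Idp)$. (iii) If $\Tp,\Idp\notin S'$ but $\Ip\in S'$ together with at least one of $\Id,\Td$, then by inspection of the Hasse diagram the join becomes $\Idp$ (the least common upper bound of $\Ip$ and $\Id$, or of $\Ip$ and $\Td$, is $\Idp$), while on the right $\mymax$ has $p$-component $\tfrac12$ and $d$-component $\tfrac12$ or $1$, again landing in the second clause and giving $[\tfrac12,\tfrac12]$. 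The remaining four configurations are the easy linearly ordered tails: $S'\subseteq\{\F,\Ip\}$ (join $=\Ip$, $\mymax=[0,\tfrac12]$); $S'\subseteq\{\F,\Id,\Td\}$ with $\Td\in S'$ (join $=\Td$, $\mymax=[1,0]$); $S'\subseteq\{\F,\Id\}$ with $\Id\in S'$ (join $=\Id$, $\mymax=[\tfrac12,0]$); and $S'\subseteq\{\F\}$ (join $=\F$, $\mymax=[0,0]$). In each of these, $\mymax$ already agrees with $\delta$ of the join, so the third ``otherwise'' clause of~(\ref{e:po pairwise}) closes the case.

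The main obstacle is not conceptual but bookkeeping: making sure that the case split is exhaustive and mutually exclusive, and that the join computations in $\calL_{\po}$ agree with the Hasse diagram when more than one element of $\{\Ip,\Id,\Td\}$ appears but neither $\Idp$ nor $\Tp$ does. The cleanest way to present this is a single table with one row per ``maximal configuration'' (i.e., characterized by the highest-priority element of $S'$ in the permit-overrides ordering), showing the value of $\lub{\po}S'$, of $\mymax(\delta(S'))$, and of both sides of the claimed equation; this directly exhibits the equality $\delta\bigl(\lub{\po}S'\bigr)=\bigoplus_{\po}^{\policyvalues}(\delta(S))$ for every case and thereby establishes the proposition.
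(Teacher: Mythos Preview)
Your proposal is correct and is essentially the same argument as the paper's: a finite case analysis on the set $S'$ of values occurring in $S$, checking in each case that the join in $\calL_{\po}$ and the pairwise computation in~(\ref{e:po pairwise}) agree via $\delta$. The only difference is organizational: the paper splits on the six possible \emph{outcome} values of $\bigoplus_{\po}^{V_6}(S)$ and works backwards to the admissible shapes of $S'$, whereas you split on the highest-priority element present in $S'$ following the seven-clause English specification; both partitions cover the same cases and yield the same verifications.
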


\begin{proof}
Let $S = \seq{s_1, \ldots, s_n}$ and $S' = \Set{s_1, \ldots, s_n}$. There are six possible outcomes for $\delta(\bigoplus_{\po}^{V_6}(S)) = \bigoplus_{\po}^{\policyvalues}(\delta(S))$:
\begin{enumerate}
   \item $\delta(\bigoplus_{\po}^{V_6}(S)) = [1,0]$ iff $\bigoplus_{\po}^{V_6}(S) = \Td = \bigsqcup_{\po} S'$ (by \eqref{e:po}). Based on $\sqsubseteq_{\po}$ we get that $\exists i : s_i = \Td$ and $\forall j : i \neq j, s_j \in \Set{\Td, \Id, \F}$. Thus, by \eqref{e:delta} we get that $\delta(s_i) = [1,0]$ and $\forall j : i \neq j, \delta(s_j) \in \Set{[1,0], [\half,0], [0,0]}$. Furthermore we get that $\mymax(\Set{\delta(s_1), \ldots, \delta(s_n)} = [1,0]$. Hence, by \eqref{e:po pairwise} we get that $\bigoplus_{\po}^{\policyvalues}(\delta(S)) = [1,0]$.

   \item $\delta(\bigoplus_{\po}^{V_6}(S)) = [0,1]$ iff $\bigoplus_{\po}^{V_6}(S) = \Tp = \bigsqcup_{\po} S'$ (by \eqref{e:po}). Based on $\sqsubseteq_{\po}$ we get that $\exists i : s_i = \Td$. Thus, by \eqref{e:delta} we get that $\delta(s_i) = [0,1]$. Furthermore we get $\mymax(\Set{\delta(s_1), \ldots, \delta(s_n)} = [\_,1])$. Hence, by \eqref{e:po pairwise} we get $\bigoplus_{\po}^{\policyvalues}(\delta(S)) = [0,1]$.

   \item $\delta(\bigoplus_{\po}^{V_6}(S)) = [\half,\half]$ iff $\bigoplus_{\po}^{V_6}(S) = \Idp = \bigsqcup_{\po} S'$ (by \eqref{e:po}). Based on $\sqsubseteq_{\po}$ there are three cases:
   \begin{enumerate}
      \item $\exists i : s_i = \Idp$ and $\forall j : j \neq i, s_j \in \Set{\Idp, \Ip, \Td, \Id, \F}$. Hence, by \eqref{e:delta} we get that $\delta(s_i) = [\half, \half]$ and $\forall s_j: \delta(s_j) \in \Set{[\half,\half], [0,\half], [1,0], [\half,0], [0,0]}$. Furthermore we get $\mymax(\Set{\delta(s_1), \ldots, \delta(s_n)} = [D,1])$ where $D \geq \half$. Therefore, by \eqref{e:po pairwise}  we get that  $\bigoplus_{\po}^{\policyvalues}(\delta(S)) = [\half,\half]$. 
      \item $\exists i, j : s_i = \Ip, s_j \in \Td$ and $\forall k: k \neq i, k \neq j, s_k \in \Set{\Ip, \Td, \Id, \F}$. Hence, by \eqref{e:delta} we get that $\delta(s_i) = [0,\half]$ and $\delta(s_j) = [1,0]$ and $\forall k: \delta(s_k) \in \Set{[0,\half], [1,0], [\half,0], [0,0]}$. Therefore, we get $\mymax(\Set{\delta(s_1), \ldots, \delta(s_n)} = [D,1])$ where $D \geq \half$. Moreover, by \eqref{e:po pairwise}  we get that  $\bigoplus_{\po}^{\policyvalues}(\delta(S)) = [\half,\half]$. 
      \item $\exists i, j : s_i = \Ip, s_j \in \Id$ and $\forall k : k \neq i, k \neq j, s_k \in \Set{\Ip, \Id, \F}$. Hence, by \eqref{e:delta} we get that $\delta(s_i) = [0,\half]$ and $\delta(s_j) = [1,0]$ and $\forall k: \delta(s_k) \in \Set{[0,\half], [\half,0], [0,0]}$. Hence, we get $\mymax(\Set{\delta(s_1), \ldots, \delta(s_n)} = [D,1])$ where $D \geq \half$. Moreover, by \eqref{e:po pairwise}  we get that  $\bigoplus_{\po}^{\policyvalues}(\delta(S)) = [\half,\half]$. 
   \end{enumerate}

   \item $\delta(\bigoplus_{\po}^{V_6}(S)) = [\half,0]$ iff $\bigoplus_{\po}^{V_6}(S) = \Id = \bigsqcup_{\po} S'$ (by \eqref{e:po}). Based on $\sqsubseteq_{\po}$ we get that $\exists i : s_i = \Id $ and $\forall j : j \neq i, s_j \in \Set{\Id, \F}$.  Hence, by \eqref{e:delta} we get that $\delta(s_i) = [\half,0]$ and $\forall j: \delta(s_j) \in \Set{[\half,0], [0,0]}$. Furthermore we get $\mymax(\Set{\delta(s_1), \ldots, \delta(s_n)} = [\half,0])$. Therefore, by \eqref{e:po pairwise}  we get that  $\bigoplus_{\po}^{\policyvalues}(\delta(S)) = [\half,0]$. 

   \item $\delta(\bigoplus_{\po}^{V_6}(S)) = [0, \half]$ iff $\bigoplus_{\po}^{V_6}(S) = \Ip = \bigsqcup_{\po} S'$ (by \eqref{e:po}). Based on $\sqsubseteq_{\po}$ we get that $\exists i : s_i = \Ip $ and $\forall j : j \neq i, s_j \in \Set{\Ip, \F}$.  Hence, by \eqref{e:delta} we get that $\delta(s_i) = [0,\half]$ and $\forall j: \delta(s_j) \in \Set{[\half,0], [0,0]}$. Furthermore we get $\mymax(\Set{\delta(s_1), \ldots, \delta(s_n)} = [0,\half])$. Therefore, by \eqref{e:po pairwise}  we get that  $\bigoplus_{\po}^{\policyvalues}(\delta(S)) = [0,\half]$.

   \item $\delta(\bigoplus_{\po}^{V_6}(S)) = [0,0]$ iff $\bigoplus_{\po}^{V_6}(S) = \F = \bigsqcup_{\po} S'$ (by \eqref{e:po}). Based on $\sqsubseteq_{\po}$ we get that $\forall i : s_i = \F$. Hence, by \eqref{e:delta} we get that $\forall i : \delta(s_i) = [0,0]$. Furthermore we get $\mymax(\Set{\delta(s_1), \ldots, \delta(s_n)} = [0,0])$. Therefore, by \eqref{e:po pairwise}  we get that  $\bigoplus_{\po}^{\policyvalues}(\delta(S)) = [0,0]$. \hfill $\Box$
\end{enumerate} 
\end{proof}

\subsection{Deny-Overrides Combining Algorithm}
The deny-overrides combining algorithm is intended for those cases where a deny decision should have priority over a permit decision. This algorithm (taken from \cite{XACMLSpesification}) has the following behaviour:
\begin{enumerate}
\item If any decision is $\Td$ then the result is $\Td$,
\item otherwise, if any decision is $\Idp$ then the result is $\Idp$,
\item otherwise, if any decision is $\Id$ and another decision is $\Ip$ or $\Tp$, then the result is $\Idp$,
\item otherwise, if any decision is $\Id$ then the result is $\Id$,
\item otherwise, if decision is $\Tp$ then the result is $\Tp$,
\item otherwise, if any decision is $\Ip$ then the result is $\Ip$,
\item otherwise, the result is $\F$.
\end{enumerate}

We call $\calL_{\denyo} = (V_6,\sqsubseteq_{\denyo})$ for the lattice using the deny-overrides combining algorithm where  $\sqsubseteq_{\denyo}$ is the ordering depicted in  Figure \ref{f:l:permit-overrides}. The least upper bound operator for $\calL_{\denyo}$ is denoted by $\bigsqcup_{\denyo}$. 

\begin{definition}
The deny-overrides combining algorithm $\bigoplus_{\denyo}^{V_6}$ is a mapping function from a  sequence of $V_6$ elements into an element in $V_6$ as the result of composing policies. Let $S = \seq{s_1, \ldots,s_n}$ be a sequence of policy values in $V_6$ and $S' = \Set{s_1, \ldots, s_n}$. We define the \textit{deny-overrides combining algorithm under $V_6$} as follows: 
\begin{equation}
   \label{e:do}
   \bigoplus_{\denyo}^{V_6}(S) = \bigsqcup_{\denyo} S' \end{equation}
\end{definition}

The deny-overrides combining algorithm can also be expressed under $\policyvalues$. The idea is similar to permit-overrides combining algorithm by symmetry. 
\begin{definition}
The deny-overrides combining algorithm $\bigoplus_{\denyo}^{\policyvalues}$ is a mapping function from a  sequence of $\policyvalues$ elements into an element in $\policyvalues$ as the result of composing policies. Let $S = \seq{s_1, \ldots,s_n}$ be a sequence of policy values in $\policyvalues$ and $S' = \Set{s_1, \ldots, s_n}$. We define the \textit{deny-overrides combining algorithm under $\policyvalues$} as follows: 
\begin{equation}
\label{e:do pairwise}
   \bigoplus_{\denyo}^{\policyvalues}(S) = 
\begin{cases}
   [1,0]         & \mymax(S') = [1,\_] \\
   [\half,\half] & \mymax(S') = [\half, P], P \geq \half \\
   \mymax(S')     & \textrm{otherwise}
\end{cases}
\end{equation}
\end{definition}

\begin{proposition}
\label{prop:denyoverrides}
Let $S$ be a sequence of policy values in $V_6$. Then 
\[\delta(\bigoplus_{\denyo}^{V_6}(S)) = \bigoplus_{\denyo}^{\policyvalues}(\delta(S))\]
\end{proposition}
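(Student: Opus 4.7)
The proof is entirely symmetric to that of Proposition~\ref{prop:permitoverrides}, with the roles of $\deny$ and $\permit$ (and correspondingly the two coordinates of pairwise values) swapped. The plan is to carry out a case analysis on the six possible values of $\bigoplus_{\denyo}^{V_6}(S) = \bigsqcup_{\denyo} S'$, reading off from the middle lattice $\calL_{\denyo}$ of Figure~\ref{f:l:permit-overrides} which configurations of $S'$ realise each join.

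For each $v \in V_6$, I first enumerate the configurations of $S'$ yielding $\bigsqcup_{\denyo} S' = v$ (by inspecting what lies weakly below $v$ in $\sqsubseteq_{\denyo}$, while excluding anything strictly above), then apply $\delta$ pointwise using \eqref{e:delta}, compute $\mymax$ on the resulting set of pairs, and finally check against the clauses of \eqref{e:do pairwise} that $\bigoplus_{\denyo}^{\policyvalues}(\delta(S)) = \delta(v)$. For example, $v = \Td$ forces some $s_i = \Td$ with all others in $\{\Td, \Tp, \Idp, \Ip, \Id, \F\}$, giving $\mymax = [1, \_]$ and hence $[1,0] = \delta(\Td)$ by the first clause of \eqref{e:do pairwise}; the cases $v \in \{\Tp, \Id, \Ip, \F\}$ are equally direct.

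The only case with real content is $v = \Idp$, for which $\calL_{\denyo}$ admits three sub-configurations: (i) some $s_i = \Idp$ with all other $s_j \in \{\Idp, \Id, \Tp, \Ip, \F\}$; (ii) some $s_i = \Id$ together with some $s_j = \Tp$, remaining elements in $\{\Id, \Tp, \Ip, \F\}$; (iii) some $s_i = \Id$ together with some $s_j = \Ip$, remaining elements in $\{\Id, \Ip, \F\}$. In each sub-case, $\mymax(\delta(S'))$ has the form $[\half, P]$ with $P \geq \half$, so the second clause of \eqref{e:do pairwise} fires and returns $[\half,\half] = \delta(\Idp)$.

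Since no new ideas are needed beyond those used for permit-overrides, the main obstacle is purely clerical: one must read off from $\calL_{\denyo}$ the \emph{complete} set of configurations that yield each join (in particular not missing any of the three sub-cases for $\Idp$) and must keep the two coordinates $[D,P]$ straight when transposing the permit-overrides argument to its deny-overrides dual.
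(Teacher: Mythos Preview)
Your proposal is correct and follows exactly the approach the paper itself takes: the paper's entire proof is the sentence ``The proof of Proposition~\ref{prop:denyoverrides} is similar as the proof of Proposition~\ref{prop:permitoverrides} by symmetry.'' You have simply spelled out that symmetry in more detail than the paper does, with the same six-way case split on $\bigsqcup_{\denyo} S'$ and the same three sub-cases for $\Idp$ (mirroring the paper's three sub-cases for $\Idp$ in the permit-overrides proof).
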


The proof of Proposition \ref{prop:denyoverrides} is similar as the proof of Proposition \ref{prop:permitoverrides} by symmetry.

\subsection{First-Applicable Combining Algorithm}
The result of first-applicable algorithm is the first \R, \Pol\ or \PS\ element in the sequence whose \Tar\ and \Cnd\ is applicable. The pseudo-code of the first-applicable combining algorithm in XACML 3.0 \cite{XACMLSpesification} shows that the result of this algorithm is the first \R, \Pol\ or \PS\ that is not "not applicable". The idea is that there is a possibility an indeterminate policy could return to be an applicable policy. The first-applicable combining algorithm under $V_6$ and $\policyvalues$ are defined below.

\begin{definition}[First-Applicable Combining Algorithm]
The first-applicable combining algorithm $\bigoplus_{\fa}^{V_6}$ is a mapping function from a  sequence of $V_6$ elements into an element in $V_6$ as the result of composing policies. Let $S = \seq{s_1, \ldots,s_n}$ be a sequence of policy values in $V_6$. We define the \textit{first-applicable combining algorithm under $V_6$} as follows: 
\begin{equation}
\label{e:fa}
   \bigoplus_{\fa}^{V_6}(S) = 
\begin{cases}
   s_i & \exists i : s_i \neq \F \textrm{ and } \forall j < i: s_j = \F\\
   \F & \textrm{otherwise}
\end{cases}
\end{equation}
\end{definition}

\begin{definition}
The first-applicable combining algorithm $\bigoplus_{\fa}^{\policyvalues}$ is a mapping function from a  sequence of  $\policyvalues$ elements into an element in $\policyvalues$ as the result of composing policies. Let $S = \seq{s_1, \ldots,s_n}$ be a sequence of policy values in $\policyvalues$. We define the \textit{first applicable combining algorithm under $\policyvalues$} as follows:
\begin{equation}
\label{e:fa pairwise}
   \bigoplus_{\fa}^{\policyvalues}(S) = 
\begin{cases}
   s_i   & \exists i : s_i \neq [0,0] \textrm{ and } \forall j < i: s_j = [0,0] \\
   [0,0] & \textrm{otherwise}
\end{cases}
\end{equation}
\end{definition}

\begin{proposition}
\label{prop:firstapplicable}
Let $S$ be a sequence of policy values in $V_6$. Then 
\[\delta(\bigoplus_{\fa}^{V_6}(S)) = \bigoplus_{\fa}^{\policyvalues}(\delta(S))\]
\end{proposition}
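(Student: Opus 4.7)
The plan is to exploit the very simple structure of the first-applicable algorithm: both $\bigoplus_{\fa}^{V_6}$ and $\bigoplus_{\fa}^{\policyvalues}$ are defined by the same ``pick the first non-bottom entry, or return bottom'' schema, where the bottom element is $\F$ in $V_6$ and $[0,0]$ in $\policyvalues$. Since $\delta$ simply relabels values without mixing them, the proof should reduce to checking that $\delta$ matches these two bottoms and is injective on the rest.

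The key observation I would isolate first is the following: inspecting the clauses of \eqref{e:delta}, $\delta(X) = [0,0]$ if and only if $X = \F$. All five remaining $V_6$ values map to five distinct non-$[0,0]$ pairs in $\policyvalues$. This single fact is the engine of the proof; with it, the equivalence ``$s_i \neq \F$'' in $V_6$ and ``$\delta(s_i) \neq [0,0]$'' in $\policyvalues$ is immediate, and similarly for the side conditions ``$s_j = \F$'' versus ``$\delta(s_j) = [0,0]$''.

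The proof then splits into two cases, mirroring the definitions \eqref{e:fa} and \eqref{e:fa pairwise}. In the first case, suppose there is a least index $i$ with $s_i \neq \F$, so that $\bigoplus_{\fa}^{V_6}(S) = s_i$ and hence $\delta(\bigoplus_{\fa}^{V_6}(S)) = \delta(s_i)$. By the key observation, $i$ is also the least index with $\delta(s_i) \neq [0,0]$ in the sequence $\delta(S) = \seq{\delta(s_1), \ldots, \delta(s_n)}$, so \eqref{e:fa pairwise} gives $\bigoplus_{\fa}^{\policyvalues}(\delta(S)) = \delta(s_i)$ as well. In the second case, every $s_i = \F$, so $\bigoplus_{\fa}^{V_6}(S) = \F$ and $\delta(\F) = [0,0]$; correspondingly every $\delta(s_i) = [0,0]$, so $\bigoplus_{\fa}^{\policyvalues}(\delta(S)) = [0,0]$. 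In both cases the two sides agree.

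There is no real obstacle here: unlike the permit-overrides and deny-overrides propositions, first-applicable is a purely positional operator that does not touch the lattice structure, so no case analysis over the six values of $V_6$ is needed. The only subtlety to flag is the use of the ``otherwise'' clauses: one should verify that when no applicable index exists in $V_6$, no applicable index exists in $\policyvalues$ either, which again follows from the bottom-preserving property of $\delta$.
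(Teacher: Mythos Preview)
Your proposal is correct and is essentially the same argument as the paper's, only spelled out in more detail: the paper dispatches the proposition in one sentence, observing that \eqref{e:fa} and \eqref{e:fa pairwise} are literally the same clause-by-clause once the result of \eqref{e:fa} is pushed through $\delta$ and the input of \eqref{e:fa pairwise} is taken as $\delta(S)$. Your explicit isolation of the fact that $\delta(X)=[0,0]$ iff $X=\F$, followed by the two-case split, is exactly the content behind that one-liner.
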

\begin{proof}
The equation \eqref{e:fa} is the same as the equation \eqref{e:fa pairwise} when we consider the result of equation \eqref{e:fa} is mapped into $\policyvalues$ using $\delta$ function and the input of equation \eqref{e:fa pairwise} as $\delta(S)$. \hfill $\Box$
\end{proof}

\subsection{Only-One-Applicable Combining Algorithm}
The result of the only-one-applicable combining algorithm ensures that one and only one policy is applicable by virtue of their \Tar. If no policy applies, then the result is not applicable, but if more than one policy is applicable, then the result is indeterminate.  When exactly one policy is applicable, the result of the combining algorithm is the result of evaluating the single applicable policy.

We call $\calL_{\oa} = (V_6,\sqsubseteq_{\oa})$ for the lattice using the only-one-applicable combining algorithm where  $\sqsubseteq_{\oa}$ is the ordering depicted in  Figure \ref{f:l:permit-overrides}. The least upper bound operator for $\calL_{\oa}$ is denoted by $\bigsqcup_{\oa}$. 

\begin{definition}
The only-one-applicable combining algorithm $\bigoplus_{\oa}^{V_6}$ is a mapping function from a  sequence of $V_6$ elements into an element in $V_6$ as the result of composing policies. Let $S = \seq{s_1, \ldots,s_n}$ be a sequence of policy values in $V_6$ and $S' = \Set{s_1, \ldots, s_n}$. We define \textit{only-one-applicable combining algorithm under $V_6$} as follows 
\begin{equation}
\label{e:oa}
   \bigoplus_{\oa}^{V_6}(S) = 
\begin{cases}
   \Id & \exists i, j : i\neq j, s_i = s_j = \Td  \textrm{ and}\\
       & \forall k: s_k \neq \Td \ra s_k = \F\\
   \Ip & \exists i, j : i\neq j, s_i = s_j = \Tp  \textrm{ and}\\
       &  \forall k : s_k \neq \Tp \ra s_k = \F\\
   \bigsqcup_{\oa} S' & \textrm{otherwise}
\end{cases}
\end{equation}
\end{definition}

The only-one-applicable combining algorithm also can be expressed under $\policyvalues$. The idea is  that we inspect the maximum value of \Deny\ and \Permit\  returned from the given set of pairwise policy values. By inspecting the maximum value for each element, we know exactly the combination of pairwise policy values i.e.,  if we find that both \Deny\ and \Permit\  are not 0, it means that the \Deny\ and the \Permit\ are either applicable (i.e. it has value 1) or indeterminate (i.e. it has value $\half$). Thus, the result of this algorithm is $\Idp$ (based on the XACML 3.0 Specification \cite{XACMLSpesification}). However if only one element is not 0 then there is a possibility that many policies have the same applicable (or indeterminate) values. If there are at least two policies with the \Deny\ (or \Permit) are either applicable or indeterminate value, then the result is $\Id$ (or $\Ip$). Otherwise we take the maximum value of \Deny\  and \Permit\  from the given set of pairwise policy values as the result of only-one-applicable combining algorithm.

\begin{definition}
The only-one-applicable combining algorithm $\bigoplus_{\oa}^{\policyvalues}$ is a mapping function from a  sequence of $\policyvalues$ elements into an element in $\policyvalues$ as the result of composing policies. Let $S = \seq{s_1, \ldots,s_n}$ be a sequence of policy values in $\policyvalues$ and $S' = \Set{s_1, \ldots, s_n}$. We define \textit{only-one-applicable combining algorithm under $\policyvalues$} as follows 
\begin{equation}
\label{e:oa pairwise}
   \bigoplus_{\oa}^{\policyvalues}(S) = 
\begin{cases}
   [\half,\half] & \mymax(S') = [D,P], D,P \geq \half \\
   [\half,0]     & \mymax(S') = [D,0],  D \geq \half \textrm{ and } \\
                 & \exists i, j : i \neq j, \d(s_i), \d(s_j) \geq \half \\
   [0,\half]     & \mymax(S') = [0,P],  P \geq \half  \textrm{ and } \\
                 & \exists i, j : i \neq j, \p(s_i), \p(s_j) \geq \half\\
   \mymax(S')     & \textrm{otherwise} 
\end{cases}
\end{equation}
\end{definition}

\begin{proposition}
\label{prop:onlyoneapplicable}
Let $S$ be a sequence of policy values in $V_6$. Then 
\[\delta(\bigoplus_{\oa}^{V_6}(S)) = \bigoplus_{\oa}^{\policyvalues}(\delta(S))\]
\end{proposition}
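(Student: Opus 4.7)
The plan is to mirror the structure of the proof of Proposition \ref{prop:permitoverrides}: perform a case analysis on the six possible outcomes $X \in V_6$ of $\bigoplus_{\oa}^{V_6}(S)$, and for each, translate the structural constraints that $X = \bigoplus_{\oa}^{V_6}(S)$ imposes on $S$ into constraints on $\delta(S)$, then verify that $\bigoplus_{\oa}^{\policyvalues}(\delta(S))$ matches $\delta(X)$ by matching one of the four clauses of \eqref{e:oa pairwise}.

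First I would dispose of the easy cases. If $\bigoplus_{\oa}^{V_6}(S) = \F$, then every $s_i = \F$, so every $\delta(s_i) = [0,0]$, giving $\mymax = [0,0]$ and the ``otherwise'' clause yields $[0,0] = \delta(\F)$. If $\bigoplus_{\oa}^{V_6}(S) = \Td$, then the lub is $\Td$ while the special rule does not fire, so exactly one $s_i = \Td$ and the rest are $\F$; thus $\mymax(\delta(S)) = [1,0]$ with only one index having $\d \geq \half$, so again the ``otherwise'' clause yields $[1,0] = \delta(\Td)$. The case $\bigoplus_{\oa}^{V_6}(S) = \Tp$ is symmetric. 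For $\bigoplus_{\oa}^{V_6}(S) = \Idp$, either some $s_i = \Idp$, or the sequence contains at least one element of $\{\Td,\Id\}$ and at least one of $\{\Tp,\Ip\}$; in every sub-case $\mymax(\delta(S))$ has both components $\geq \half$, so the first clause of \eqref{e:oa pairwise} fires and yields $[\half,\half] = \delta(\Idp)$.

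The only cases needing care are $\Id$ and $\Ip$ (symmetric), because the definition \eqref{e:oa} produces $\Id$ via two different routes. I would split on these routes. In the ``special rule'' route, at least two $s_i$ equal $\Td$ and all others are $\F$; then $\delta(S)$ consists of at least two copies of $[1,0]$ together with $[0,0]$'s, so $\mymax = [1,0]$ with $\d \geq \half$ at (at least) two distinct indices, firing the second clause of \eqref{e:oa pairwise} to return $[\half,0] = \delta(\Id)$. In the ``lub'' route, $\bigsqcup_{\oa} S' = \Id$ and the special rule failed; inspecting $\calL_{\oa}$ shows every element lies in $\{\F,\Td,\Id\}$ with at least one $\Id$ present. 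Here $\mymax(\delta(S))$ is either $[\half,0]$ (no $\Td$, only $\Id$'s and $\F$'s) or $[1,0]$ (some $\Td$ alongside the $\Id$'s), and in both situations at least two indices satisfy $\d(s_k) \geq \half$ unless exactly one element is non-$\F$; that single-element configuration must be the lone $\Id$, and then the ``otherwise'' clause returns $[\half,0]$ directly. Either way, the returned value matches $\delta(\Id) = [\half,0]$.

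The main obstacle, as above, is keeping the $\Id$ (and symmetrically $\Ip$) bookkeeping straight: one must check that the ``two-distinct-indices with $\d \geq \half$'' condition in the second clause of \eqref{e:oa pairwise} exactly covers the union of the special-rule route in \eqref{e:oa} and those lub-configurations in which $\Td$ appears beside $\Id$, while the leftover ``otherwise'' clause handles the single-$\Id$-with-$\F$'s configuration. Once this correspondence is verified, the rest of the case analysis is routine, and the $\Ip$ case follows by the same argument with the coordinates swapped.
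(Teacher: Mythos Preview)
Your proposal is correct and follows essentially the same route as the paper's own proof: a six-way case analysis on the value of $\bigoplus_{\oa}^{V_6}(S)$, with the $\Id$ (and symmetric $\Ip$) case split according to whether it arises from the special two-$\Td$ rule in \eqref{e:oa} or from the lub in $\calL_{\oa}$. If anything, your treatment of the $\Id$ lub-route is slightly more explicit than the paper's about when the second clause of \eqref{e:oa pairwise} fires versus when the ``otherwise'' clause handles the lone-$\Id$ configuration.
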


\begin{proof}
Let $S = \seq{s_1, \ldots, s_n}$ and $S' = \Set{s_1, \ldots, s_n}$. There are six possible outcomes for $\delta(\bigoplus_{\oa}^{V_6}(S)) = \bigoplus_{\oa}^{\policyvalues}(\delta(S))$:
\begin{enumerate}
   \item $\delta(\bigoplus_{\oa}^{V_6}(S)) = [1,0]$ iff $\bigoplus_{\oa}^{V_6}(S) = \Td = \bigsqcup_{\oa} S'$ (by \eqref{e:oa}). Based on $\sqsubseteq_{\oa}$ we get that $\exists i : s_i = \Td$ and $\forall j : j \neq i, s_j = \F$. Furthermore, by \eqref{e:delta} we get that $\delta(s_i) = [1,0]$ and $\forall j : j \neq i, \delta(s_j) = [0,0]$. Therefore, $\mymax(\Set{\delta(s_1), \ldots, \delta(s_n)}) = [1,0]$. Thus, by \eqref{e:oa pairwise} we get $\bigoplus_{\oa}^{\policyvalues}(\delta(S)) = [1,0]$.

   \item $\delta(\bigoplus_{\oa}^{V_6}(S)) = [0,1]$ iff $\bigoplus_{\oa}^{V_6}(S) = \Tp = \bigsqcup_{\oa} S'$ (by \eqref{e:oa}). Based on $\sqsubseteq_{\oa}$ we get that $\exists i : s_i = \Tp$ and $\forall j : j \neq i, s_j = \F$. Furthermore, by\eqref{e:delta} we get that $\delta(s_i) = [0,1]$ and $\forall j: \delta(s_j) = [0,0]$. Hence, $\mymax(\Set{\delta(s_1), \ldots, \delta(s_n)}) = [0,1]$. Thus, by \eqref{e:oa pairwise} we get $\bigoplus_{\oa}^{\policyvalues}(\delta(S)) = [1,0]$.

   \item $\delta(\bigoplus_{\oa}^{V_6}(S)) = [\half,\half]$ iff $\bigoplus_{\oa}^{V_6}(S) = \Idp = \bigsqcup_{\oa} S'$ (by \eqref{e:oa}). Based on $\sqsubseteq_{\oa}$ there are two possibilities:
   \begin{enumerate}
      \item $\exists i : s_i = \Idp$. Hence, by \eqref{e:delta} we get that $\delta(s_i) = [\half,\half]$. Therefore, we get $\mymax(\Set{\delta(s_1), \ldots, \delta(s_n)}) = [D,P]$ where $D,P \geq \half$. Hence, by \eqref{e:oa pairwise} we get $\bigoplus_{\oa}^{\policyvalues}(\delta(S)) = [\half,\half]$.
      \item $\exists i : s_i \in \Set{\Id, \Td}$ and $\exists j : s_j \in \Set{\Ip, \Tp}$. Thus, by \eqref{e:delta} we get that $\delta(s_i) = [D,0]$ and $\delta(s_j) = [0,P]$ where $D,P \geq \half$. Furthermore, we get that $\mymax(\Set{\delta(s_1), \ldots, \delta(s_n)}) = [D,P]$ where $D,P \geq \half$. Hence, by \eqref{e:oa pairwise} we get $\bigoplus_{\oa}^{\policyvalues}(\delta(S)) = [\half,\half]$.
   \end{enumerate}

   \item $\delta(\bigoplus_{\oa}^{V_6}(S)) = [\half,0]$ iff $\bigoplus_{\oa}^{V_6}(S) = \Id$.  By \eqref{e:oa} we get that there are two possibilities:
   \begin{enumerate}
      \item $\exists i, j : i \neq j, s_i = s_j = \Td$ and $\forall k : s_k \neq \Td \ra s_k = \F$. Thus, by \eqref{e:delta} we get that $\delta(s_i) = \delta(s_j) = [1,0]$ and $\forall k : \delta(s_k) = [0,0]$. Hence,  $\mymax(\Set{\delta(s_1), \ldots, \delta(s_n)}) = [1,0]$ and we get $\p(s_i),\p(s_j) \geq \half$. Therefore, by \eqref{e:oa pairwise} we get $\bigoplus_{\oa}^{\policyvalues}(\delta(S)) = [\half,0]$.
      \item $\bigoplus_{\oa}^{V_6}(S) = \bigsqcup_{\oa} S' = \Id$. Thus, based on $\sqsubseteq_{\oa}$ we get that $\exists i : s_i = \Id$ and $\forall j : j \neq i, s_j \in \Set{\Id, \Td, \F}$. Thus, $\delta(s_i) = [\half,0]$ and $\forall j : \delta(s_j) \in \Set{[\half,0], [1,0], [0,0]}$  by \eqref{e:delta}. Hence, $\mymax(\Set{\delta(s_1), \ldots, \delta(s_n)}) = [D,0]$ where $D \geq \half$. There are two possibilities:
      \begin{enumerate}
         \item $D = 1$ iff $\exists k : s_k = [1,0]$. Thus, we get $s_i$ and $s_k$ where $\d(s_i),\d(s_k) \geq \half$. Therefore, by \eqref{e:oa pairwise} we get $\bigoplus_{\oa}^{\policyvalues}(\delta(S)) = [\half,0]$.
         \item $D = \half$. Therefore, by \eqref{e:oa pairwise} we get $\bigoplus_{\oa}^{\policyvalues}(\delta(S)) = [\half,0]$.
      \end{enumerate}
   \end{enumerate}

    \item $\delta(\bigoplus_{\oa}^{V_6}(S)) = [0,\half]$ iff $\bigoplus_{\oa}^{V_6}(S) = \Ip$.  By \eqref{e:oa} we get that there are two possibilities:
   \begin{enumerate}
      \item .$\exists i, j : i \neq j, s_i = s_j = \Tp$ and $\forall k : s_k \neq \Tp \ra s_k = \F$. Thus, by \eqref{e:delta} we get that $\delta(s_i) = \delta(s_j) = [0,1]$ and $\forall k: \delta(s_k) = [0,0]$. Hence,  $\mymax(\Set{\delta(s_1), \ldots, \delta(s_n)} = [0,1])$ and we get $\p(s_i),\p(s_j) \geq \half$. Therefore, by \eqref{e:oa pairwise} we get $\bigoplus_{\oa}^{\policyvalues}(\delta(S)) = [0,\half]$.
      \item $\bigoplus_{\oa}^{V_6}(S) = \bigsqcup_{\oa} S' = \Ip$. Thus, based on $\sqsubseteq_{\oa}$ we get that $\exists i : s_i = \Ip$ and $\forall j : i \neq j, s_j \in \Set{\Ip, \Tp, \F}$. Thus, $\delta(s_i) = [\half,0]$ and $\forall j: \delta(s_j) \in \Set{[0,\half], [0,1], [0,0]}$ by \eqref{e:delta}. Hence, $\mymax(\Set{\delta(s_1), \ldots, \delta(s_n)} = [0,P])$ where $P \geq \half$. There are two possibilities:
      \begin{enumerate}
         \item $P = 1$ iff $\exists k : s_k = [0,1]$. Thus, we get $s_i$ and $s_k$ where $\p(s_i),\p(s_k) \geq \half$. Therefore, by \eqref{e:oa pairwise} we get $\bigoplus_{\oa}^{\policyvalues}(\delta(S)) = [0,\half]$.
         \item $P = \half$. Therefore, by \eqref{e:oa pairwise} we get $\bigoplus_{\oa}^{\policyvalues}(\delta(S)) = [0,\half]$.
      \end{enumerate}
   \end{enumerate}

   \item $\delta(\bigoplus_{\oa}^{V_6}(S)) = [0,0]$ iff $\bigoplus_{\oa}^{V_6}(S) = \F = \bigsqcup_{\oa} S'$ (by \eqref{e:oa}). Based on $\sqsubseteq_{\oa}$ we get that $\forall i : s_i = \F$. Furthermore, by \eqref{e:delta} we get that $\forall i: \delta(s_i) = [0,0]$. Therefore, $\mymax(\Set{\delta(s_1), \ldots, \delta(s_n)} = [0,0])$. Thus, by \eqref{e:oa pairwise} we get $\bigoplus_{\oa}^{\policyvalues}(\delta(S)) = [0,0]$. \hfill $\Box$
\end{enumerate}
\end{proof}
 
\section{Related Work}
\label{s:related work}
We will focus the discussion on the formalization of XACML using Belnap logic \cite{Belnap1977} and \D-Algebra \cite{Ni2009} -- those two have a similar approach to the pairwise policy values approach explained in Section \ref{s:combining algorithms}.  In this section, we show the shortcoming of the formalization  on Bruns \etal work in \cite{Bruns2008} and Ni \etal work in \cite{Ni2009}.

\subsection{XACML Semantics under Belnap Four-Valued Logic}
Belnap in his paper \cite{Belnap1977} defines a four-valued logic over $\four = \Set{\topf,\tval, \fval, \botf}$. There are two orderings in Belnap logic, i.e., the knowledge ordering ($\leq_k$) and the truth ordering ($\leq_t$) (see Figure \ref{f:4-valued belnap logic}). 

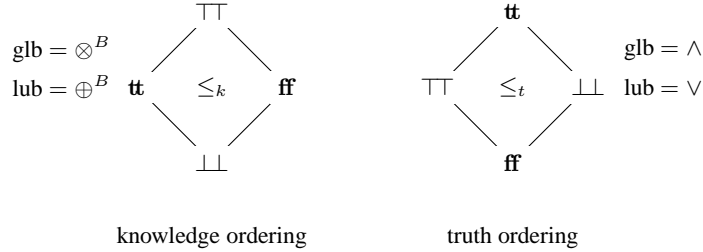
\begin{figure}[h]
\begin{center}
\begin{tikzpicture}
\draw (0,0) -- (-1,1) -- (0,2) -- (1,1) -- (0,0);
\draw (4,0) -- (3,1) -- (4,2) -- (5,1) -- (4,0);
\draw (0,0) node [fill=white] {$\botf$};
\draw (-1,1) node [fill=white] {$\tval$};
\draw (1,1) node [fill=white] {$\fval$};
\draw (0,2) node [fill=white] {$\topf$};
\draw (4,0) node [fill=white] {$\fval$};
\draw (3,1) node [fill=white] {$\topf$};
\draw (5,1) node [fill=white] {$\botf$};
\draw (4,2) node [fill=white] {$\tval$};
\draw (0,1) node [fill=white] {$\leq_{k}$};
\draw (4,1) node [fill=white] {$\leq_{t}$};
\draw (-2,1.5) node [fill=white] {glb $ = \otimesb$};
\draw (-2,1) node [fill=white] {lub $ = \oplusb$};
\draw (6,1.5) node [fill=white] {glb $ = \wedge$};
\draw (6,1) node [fill=white] {lub $ = \vee$};
\draw (0,-1)node [fill=white] {knowledge ordering};
\draw (4,-1)node [fill=white] {truth ordering};
\end{tikzpicture}
\end{center}
\vspace{-15pt} 
\caption{Bi-lattice of Belnap Four-Valued Logic}
\label{f:4-valued belnap logic}
\vspace{-15pt} 
\end{figure}

Bruns \etal  in PBel \cite{Bruns2007,Bruns2008} and also Hankin \etal  in AspectKB \cite{Hankin2009} use Belnap four-valued logic to represent the composition of access control policies. The responses of an access control system are $\tval$ when the policy is granted or access permitted, $\fval$ when the policy is not granted or access is denied, $\botf$ when there is no applicable policy and $\topf$ when conflict arises, i.e., an access is both permitted and denied. Additional operators are added as follows \cite{Bruns2008}:
\begin{itemize}
   \item overwriting operator $[y \mapsto z]$ with $y,z \in \four$. Expression $x[y \mapsto z]$ yields $x$ if $x \neq y$, and $z$ otherwise.
   \item priority operator $x > y$; it is a syntactic sugar of $x[\botf \mapsto y]$. 
\end{itemize}

Bruns \etal defined XACML combining algorithms using Belnap four-valued logic as follows \cite{Bruns2008}:
\begin{itemize}
   \item \textbf{permit-overrides}: $(p \oplusb q)[\topf \mapsto \fval]$
   \item \textbf{first-applicable}: $p > q$
   \item \textbf{only-one-applicable}: $(p \oplusb q)\oplusb((p \oplusb \neg p)\otimesb (q \oplusb \neg q))$
\end{itemize}

Bruns \etal suggested that the indeterminate value is treated as $\topf$. However, with 
indeterminate as $\topf$, the permit-overrides combining algorithm is not defined correctly. Suppose we have two policies: $p$ and $q$ where $p$ is permit and $q$ is indeterminate. The result of the permit-overrides combining algorithm is as follows
$(p \oplusb q)[\topf \mapsto \fval]  = (\tval \oplusb \topf)[\topf \mapsto \fval] = \topf[\topf \mapsto \fval] = \fval$. Based on the XACML 2.0 \cite{xacmlv2} and  the XACML 3.0 \cite{XACMLSpesification}, the result of permit-overrides combining algorithm should be permit ($\tval$). However, based on Belnap four-valued logic, the result is deny ($\fval$).

Bruns \etal tried to define indeterminate value as a conflict by formalizing it as $\topf$. However, their formulation of permit-overrides combining algorithm is inconsistent based on the standard XACML specification. Moreover, they said that sometimes indeterminate should be treated as $\botf$ and sometimes as $\topf$ \cite{Bruns2007}, but there is no explanation about under which circumstances that indeterminate is treated as $\topf$ or as $\botf$.  The treatment of indeterminate as $\topf$ is too strong because indeterminate does not always contains information about deny and permit in the same time. Only $\Idp$ contains information both deny and permit. However, $\Id$ and  $\Ip$  only contain information only about deny and permit, respectively. Even so, the value $\botf$ for indeterminate is too weak because indeterminate is treated as not applicable despite that there is information contained inside indeterminate value.
The Belnap four-valued logic has no explicit definition of indeterminate. In contrast, the Belnap four-valued has a \textit{conflict} value (i.e., $\topf$).

\subsection{XACML Semantics under \D-Algebra}
Ni \etal in \cite{Ni2009} define \D-algebra as a decision set together with some operations on it. 

\begin{definition}[\D-algebra \cite{Ni2009}]
Let $D$ be a nonempty set of elements, 0 be a constant element of $D$, $\neg$ be a unary operation on elements in \D, and $\oplusd,\otimesd$ be binary operations on elements in $D$. A \D-algebra is an algebraic structure $\langle D,\neg,\oplusd,\otimesd,0 \rangle$ closed on $\neg,\oplusd,\otimesd$ and satisfying the following axioms:
\begin{enumerate}
 \item $x \oplusd y = y \oplusd x$
 \item $(x \oplusd y)\oplusd z = x \oplusd(y \oplusd z)$
 \item $x \oplusd 0 = x$
 \item $\neg \neg x = x$
 \item $x \oplusd \neg 0 = \neg 0$
 \item $\neg(\neg x \oplusd y) \oplusd y = \neg(\neg y \oplusd x) \oplusd x$
 \item $x \otimesd y =  \begin{cases}
                          \neg 0 & : x = y \\
                          0 & : x \neq y
                         \end{cases}$
\end{enumerate}
\end{definition}

In order to write formulae in a compact form, for $x, y \in \mathcal{D}$, $x \odotd y = \neg(\neg x \oplusd \neg y)$ and $x \ominusd y = x \odotd \neg y$. 

Ni \etal  \cite{Ni2009} show that XACML decisions contain three different value, i.e., permit ($\{\pval\}$), deny ($\{\dval\}$) and not applicable ($\{\naa\}$). Those decision are \textit{deterministic decisions}. The \textit{non-deterministic decisions} such as $\Id$, $\Ip$ and $\Idp$ are denoted by $\Set{\dval, \naa}$, $\Set{\pval,\naa}$, and $\Set{\dval, \pval, \naa}$, respectively. The interpretation of  a \D-algebra on XACML decisions  is as follows \cite{Ni2009}:
\begin{itemize}
 \item $D$ is represented by $\calP(\Set{\pval, \dval, \naa})$
 \item 0 is represented by $\emptyset$
 \item $\neg x$ is represented by $\Set{\pval, \dval, \naa} - x$ where $x \in D$
 \item $x \oplusd y$ is represented by $x \cup y$ where $x,y \in D$
 \item $\otimesd$ is defined by axiom 7
\end{itemize}

There are two values which are not in XACML, i.e., $\emptyset$ and $\Set{\pval, \dval}$. Simply we say $\emptyset$ for empty policy (or there is no policy) and $\Set{\pval, \dval}$ for a conflict. 

The composition function of permit-overrides using \D-Algebra is as follows:
\[
\begin{array}{lcl}
 f_{po}(x,y) & = & (x \oplusd y) \\
        &   & \ominusd (((x\otimesd\Set{\pval})\oplusd (y \otimesd \Set{\pval})) \odotd \Set{\dval, \naa}) \\
        &   & \ominusd (\neg((x \odotd y)\otimesd\Set{\naa})\odotd\Set{\naa}\odotd\neg((x\otimesd\emptyset)\oplusd(y\otimesd\emptyset)))
\end{array}
\]

\Input{additionalDAlgebra.tex}


Below we show an example that compares all of the results of permit-overrides combining algorithm under the logics discussed in this paper.

\begin{example}
\label{e:failure example}
Given two policies $P_1$ and $P_2$ where $P_1$ is Indeterminate Permit and $P_2$ is Deny. Let us use the permit-overrides combining algorithm to compose those two policies. Table \ref{t:result of example} shows the result of combining polices under Belnap logic, \D-algebra, $V_6$ and $\policyvalues$. 

\begin{table}[h]

\caption{Result of Permit-Overrides Combining Algorithm for Composing Two Policies $P_1$ and $P_2$ where $P_1$ is Indeterminate Permit and $P_2$ is Deny Under Various Logic}
\label{t:result of example}
\vspace{-10pt} 
\[
\begin{array}{|l|c|c|c|c|}
\hline
\textrm{Logic}       &P_1               & P_2    & \textrm{Permit-Overrides Function} & \textrm{Result} \\
\hline
\textrm{Belnap logic}& \topf            & \fval  &(\topf \oplusb \fval)[\topf \mapsto \fval]    &\fval \\
\textrm{\D-algebra}  & \Set{\pval, \naa}& \Set{\dval}  &f_{po}(\Set{\pval, \naa}, \Set{\dval})       &\Set{\pval, \dval}\\
V_6                  & \Ip              & \Td    &\bigoplus_{\po}^{V_6}(\seq{\Ip, \Td})        &\Idp\\
\policyvalues        & [0,\half]        & [1,0]  &\bigoplus_{\po}^{\policyvalues}(\seq{[0,\half], [1,0]})&[\half,\half]\\
\hline
\end{array}
\]
\end{table}
\end{example}

\vspace{-10pt}
The result of permit-overrides combining algorithm under Belnap logic is $\fval$ and under \D-algebra is $\Set{\pval, \dval}$. Under Bruns \etal approach using Belnap logic, the access is denied while under Ni \etal approach using \D-algebra, a conflict occurs. Both Bruns \etal  and Ni \etal claim that their approaches fit with XACML 2.0 \cite{xacmlv2}. Moreover \D-algebra claims that it fits with XACML 3.0 \cite{XACMLSpesification}. However based on  XACML 2.0 the result should be Indeterminate and based on XACML 3.0 the result should be Indeterminate Deny Permit and neither Belnap logic nor \D-algebra fits the specifications. We have illustrated that Belnap logic and \D-algebra in some cases give different result with the XACML specification. Conversely, our approaches  give consistent result based on the XACML 3.0 \cite{XACMLSpesification} and on the XACML 2.0 \cite{xacmlv2}.

\section{Conclusion}
\label{s:conclusion and future work}

We have shown the formalization of XACML 3.0 step by step. We believe that with our approach, the user can understand better about how XACML works especially in the behaviour of combining algorithms. We show two approaches to formalizing standard XACML combining algorithms, i.e., using $V_6$ and $\policyvalues$. To guard against modelling artifacts, we formally prove the equivalence of these approaches. 

The pairwise policy values  approach is  useful in defining new combining algorithms. For example, suppose we have a new combining algorithm "all permit", i.e., the result of composing policies is permit if all policies give permit values, otherwise it is deny. Using pairwise policy values approach the result  of composing a set of policies values $S$  is permit ([0,1]) if  $\mymin(S) = [0,1] = \mymax(S)$, otherwise, it is deny ([1,0]).

Ni \etal\ proposes a \D-algebra over a set of decisions for XACML combining algorithms in \cite{Ni2009}. However, there are some mismatches between their results and the XACML specifications. Their formulations are inconsistent based both on the XACML 2.0 \cite{xacmlv2} and on the XACML 3.0 \cite{XACMLSpesification}. 

Both Belnap four-valued logic and \D-Algebra have a conflict value. In XACML, the conflict will never occur because the combining algorithms do not allow that. Conflict value might be a good indication that the policies are not well design. We propose an extended $\policyvalues$ which captures a conflict value in Appendix \ref{app:nine-valued logic}.

\bibliography{bibliography2}
\appendix
\section{Extended Pairwise Policy Values}
\label{app:nine-valued logic}
We add three values into $\policyvalues$, i.e. deny with indeterminate permit ($ [1,\half]$), permit with indeterminate deny ($[\half,1]$) and conflict ($[1,1]$) and we call 
the \textit{extended pairwise policy values} $\policyvalues_9 = \policyvalues \cup \Set{[1,\half], [\half,1], [1,1]}$. The extended pairwise policy values shows all possible combination of pairwise policy values.  The ordering of $\policyvalues_9$ is illustrated in Figure \ref{f:9-valued lattice}.

\begin{figure}[h]
\vspace{-15pt} 
\begin{center}
\begin{tikzpicture}[scale=0.8]
\draw (0,0) -- (-3,2) -- (-3,4) -- (-3,6) -- (0,8);
\draw (0,0) -- (3,2) -- (3,4) -- (3,6) -- (0,8);
\draw (-3,2) -- (0,4) -- (-3,6);
\draw (3,2) -- (0,4) -- (3,6); 
\draw (0,0) node [fill=white] {$[0,0] = \bot$};
\draw (-3,2) node [fill=white] {$[\frac{1}{2},0] = \ \Id$};
\draw (3,2) node [fill=white] {$[0,\frac{1}{2}] = \ \Ip$};
\draw (-3,4) node [fill=white] {$[1,0] = \ \T_\deny$};
\draw (0,4) node [fill=white] {$[\frac{1}{2},\frac{1}{2}] = \ \Idp$};
\draw (3,4) node [fill=white] {$[0,1] = \ \T_\permit$};
\draw (3,6) node [fill=white] {$[\half,1] = \Id\T_\permit$};
\draw (-3,6) node [fill=white] {$[1,\half] = \T_\deny\Ip$};
\draw (0,8) node [fill=white] {$[1,1] = \T_\deny\T_\permit$};
\end{tikzpicture}
\end{center}
\vspace{-15pt} 
\caption{Nine-Valued Lattice}
\label{f:9-valued lattice}
\vspace{-15pt} 
\end{figure}
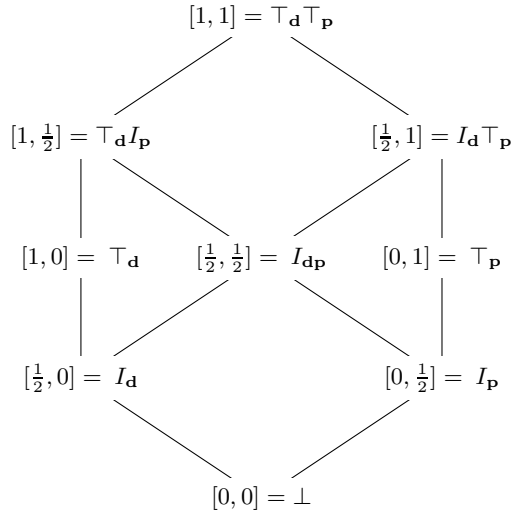

We can see that $\policyvalues_9$ forms a lattice (we call this $\calL_9$) where the top element is $[1,1]$ and the bottom element is $[0,0]$. The ordering of this lattice is the same as $\mysubseteq$ where the greatest lower bound and the least upper bound  for $S \subseteq \policyvalues_9$ are defined as follows:
\begin{equation*}
   \glbL_{\calL_9}S = \mymax(S) \textrm{ and } 
   \lubL_{\calL_9}S = \mymin(S) 
\end{equation*}

\end{document}